\newtheorem{theorem}{Theorem}
\newtheorem{lemma}[theorem]{Lemma}
\newtheorem{proposition}[theorem]{Proposition}
\newtheorem{definition}[theorem]{Definition}
\newlength{\blank}
\newcommand{\nc}{\newcommand}
\newcommand{\tr}{\text{tr}}
\nc{\bra}[1]{\langle#1|}
\nc{\ket}[1]{|#1\rangle}
\nc{\pw}{\mt{PW}}
\nc{\arbclass}{\mt{\Omega}}
\nc{\rnc}{\renewcommand}
\nc{\catchset}{T}
\nc{\dg}{\dagger}
\nc{\dn}{\downarrow}
\nc{\id}{{\operatorname{id}}}
\begin{document}

\title{Public Quantum Communication and Superactivation}

\author{Fernando~G.S.L.~Brand\~ao and Jonathan Oppenheim 
  \thanks{Fernando Brand\~ao (fgslbrandao@googlemail.com) is at the Physics Department of Universidade Federal de Minas Gerais, Brazil. 
Jonathan Oppenheim (jono@damtp.cam.ac.uk) is in the Dept. of Applied Mathematics and Theoretical Physics, University of Cambridge.  FB is supported by a "Conhecimento Novo" fellowship from Funda\c{c}\~ao 
de Amparo a Pesquisa do Estado de Minas Gerais (FAPEMIG). JO is supported by
the Royal Society, and NSF grant PHY-0551164 while at KITP, Santa Barbara.} 
}
\maketitle

\begin{abstract}

Is there a meaningful quantum counterpart to public communication? We
argue that the symmetric-side channel -- which distributes quantum
information symmetrically between the receiver and the environment -- is
a good candidate for a notion of public quantum communication in
entanglement distillation and quantum error correction.

This connection is partially motivated by [Brand\~ao and Oppenheim, arXiv:1004.3328],
where it was found that if a sender would like to communicate a secret
message to a receiver through an insecure quantum channel using a
shared quantum state as a key, then the insecure quantum channel is only
ever used to simulate a symmetric-side channel, and can always be
replaced by it without altering  the optimal rate. Here we further show, in
complete analogy to the role of public classical communication, 
that assistance by a symmetric-side channel makes equal the distillable
entanglement, the recently-introduced mutual independence, and a
generalization of the latter, which quantifies the extent to which one
of the parties can perform quantum privacy amplification.

Symmetric-side channels, and the closely related erasure channel, have
been recently harnessed to provide examples of superactivation of the
quantum channel capacity. Our findings give new insight into this
non-additivity of the channel capacity and its relation to quantum privacy. In particular, we show that single-copy
superactivation protocols with the erasure channel, which encompasses
all examples of non-additivity of the quantum capacity found to date, can be understood as a conversion of mutual
independence into distillable entanglement.

\end{abstract}

\section{Introduction and Results}
\label{sec:intro}

Suppose two trusted parties, Alice and Bob, and a malicious third party, Eve, share 
noisy classical correlations given by a joint probability distribution $P_{XYZ}$. 
These could come e.g. from measurements on a quantum state shared by them, or from a noisy communication channel, which the eavesdropper is trying to tap. If Alice and Bob's distribution contains some correlations that are 
partially unknown to Eve, they can exploit this to distill a secret-key 
(a shared random variable which is arbitrarily close to being perfectly correlated between
Alice and Bob, and completely unknown to Eve). This process of obtaining key by previously established correlations is known as \textit{information-theoretic} key agreement \cite{Mau93, AC93, CK78, Wyn75}, alluding to the unconditional security of the protocol guaranteed by information-theoretic considerations, and not conditional on any computational-hardness assumption. 

A key resource in this paradigm is \textit{public communication}, which is conveniently represented by a symmetric broadcast channel which delivers the same information to Bob and Eve (or Alice and Eve, if the communication is coming from Bob). The notion of public communication is useful first because it is a realistic one: in many practical situations the parties have access to an authenticated channel which they might freely use to communicate, but which might nonetheless be 
subject to eavesdropping.
Second, the ability to communicate by a public channel turns out to be instrumental in the development of an elegant, and tractable, theory of secret-key distillation \cite{AC93, CK78, Wyn75}. 

For example, in the\textit{ one-way} or \textit{forward} public communication scenario (which is the one considered throughout the rest of this paper), only Alice is able to send public messages to Bob and Eve. In this case the distillable secret-key rate of the distribution $P_{XYZ}$ (when the parties are given infinitely many independent realizations of it) is given by \cite{AC93, CK78, Wyn75}
\begin{equation} \label{distillalesecretkeyclassical}
C(P_{XYZ})=\sup_{X \rightarrow V \rightarrow U}I(V:Y|U)-I(V:Z|U),
\end{equation}
with the conditional mutual information $I(V:Y|U):=H(VU)+H(YU)-$$H(VYU)-H(U)$, the Shannon entropy $H(X):=-\sum_x P_{X=x}\log P_{X=x}$, and the supremum taken over the Markov chain $X\rightarrow V\rightarrow U$. 

The formula in Eq. (\ref{distillalesecretkeyclassical})  is so-called \textit{single-letter}, meaning that an optimization over a single copy of the probability distribution gives the asymptotic rate. Moreover it is \textit{additive}, i.e. for two probability distributions $P_{XYZ}$ and $Q_{X'Y'Z'}$, $C(P_{XYZ}\otimes Q_{X'Y'Z'}) = C(P_{XYZ}) + C(Q_{X'Y'Z'})$ \cite{AC93}. We can then say that Eq. (\ref{distillalesecretkeyclassical})
completely characterizes how to optimally distill secret-key in the one-way scenario. 
In contrast, if one instead consider the task of distilling a key from common randomness without any communication, one finds a much more complicated theory, in which even the determination of which probability distributions allow the extraction of key remains an open problem, let alone the derivation of a tractable formula for the distillable secret-key rate.

In quantum information theory, the paradigm described above has two natural analogues, and both have been extensively analysed \cite{HHHH09, DW05, HHHO09}. The first is to distill a secret-key from a tripartite quantum state $\ket{\psi_{ABE}}$ shared by Alice, Bob and Eve \cite{HHHO09}. Alice and Bob can perform any operation allowed by quantum mechanics on their shares of the state, while (in the one-way setting considered here) Alice can communicate public \textit{classical} messages to Bob and Eve. The second is entanglement distillation \cite{HHHH09}, in which Alice and Bob  wish to distill Einstein-Podolsky-Rosen (EPR) pairs from a shared state $\psi_{AB}$ by local quantum operations and, again, \textit{classical} communication from Alice to Bob (here too, although not needed, one can consider that an eavesdropper has a purification of $\psi_{AB}$ i.e.\ a pure
state $\psi_{ABE}$ such that $\psi_{AB}=\tr_E{\psi_{AB}}$, and Eve 
learns all the classical communication that Alice sends to Bob). 

In both paradigms, the shared randomness is extended from the original classical probability distribution to a quantum state. The public communication, however, remains the same; even in the quantum case only classical messages can be publicly communicated. A natural question then emerges: is there a meaningful notion of \textit{public quantum communication}?

A first objection to such a notion comes from the fact that quantum information cannot be copied \cite{WZ82, Die82} and, hence, it is problematic to consider a procedure which gives the same quantum information to several parties, as in classical public communication. A second objection comes from practical considerations. Both secret-key and entanglement distillation are important operational primitives, and the paradigm of quantum local operations and public classical communication emerges  naturally from the resources which are usually available, e.g. in a quantum key-distribution set-up. So if we do not have an interesting setting where a notion of public quantum communication is needed, why bother with such concept?
In this paper we show that, at least in the one-way setting, both objections  are not well founded, and that there is room for a useful definition of quantum public communication. 

\subsection{Symmetric-side Channels}
\label{ss:symmetric-side}

In the same way a broadcast channel (which sends the same information to the two receivers) is employed as a model of a classical public communication channel, we will use a quantum \textit{symmetric-side} channel \cite{SSW08} as a model of quantum public communication. This is the channel which maps quantum information symmetrically between the receiver and the environment (see section \ref{backdef} for a precise definition). 

Note that although both the receiver and the eavesdropper get the same quantum information in the symmetric-side channel, neither of them get the full information originally contained in the sender's state, so there is no cloning of quantum information. Moreover, if the sender prepares, as the input of the symmetric-side channel, a state diagonal in the computational basis, then the channel has the same effect as a classical symmetric  broadcast channel. In this way symmetric-side channels can be seen as at least as \textit{a} natural generalization of public communication to quantum states.

Symmetric-side channels were introduced by Smith, Smolin and Winter \cite{SSW08} with the goal of obtaining a more tractable upper bound on the quantum capacity of quantum channels, defined as the optimal asymptotic rate at which a quantum channel can transmit  quantum states faithfully. They analysed how assistance by a symmetric-side channel could improve the quantum channel capacity and the (one-way) distillable entanglement -- given by the maximum rate of EPR pairs that can be asymptotically extracted from a quantum state by local operation and classical communication (from Alice to Bob). Remarkably, they found single-letter, additive expressions for both assisted capacities: for a bipartite state $\psi_{AB}$ with purification $\ket{\psi_{ABE}}$, the symmetric-side channel assisted distillable entanglement (see section \ref{backdef} for a formal definition)  
can be manipulated into the form      
\begin{equation} \label{ssdistillableent}
D_{ss}(\psi_{AB}) =\sup_{A\rightarrow a\alpha}\frac{1}{2}
(I(a:B|\alpha)-I(a:E|\alpha)),
\end{equation}
using entropic identities
and with $I(a:B|\alpha):=S(a\alpha)+S(B\alpha)-S(aB\alpha)-S(\alpha)$ the conditional mutual information, $S(a):=-\tr \rho_a\log \rho_a$ the von Neumann entropy, and the supremum
taken over all channels which maps $A$ to $a \alpha$. 
This expression is a direct quantum generalization of Eq. (\ref{distillalesecretkeyclassical}) and in itself already suggests a formal analogy of distillable entanglement and symmetric-side channels with classical secret-key and public communication.

\subsection{An Operational Motivation}
\label{ss:operational}

For an operational motivation for the view of symmetric-side channels as public quantum communication, we consider the quantum \textit{one-time-pad} problem analysed and solved in \cite{BO10} (partially employing the techniques we developed in this paper). The setting is as follows. Alice would like to send to Bob secret classical or quantum messages, using an ideal, but insecure, quantum channel which might be intercepted by an eavesdropper, who should not be able to learn anything about the message being sent. 

Alice and Bob can make use of the insecure channel for secure communication if they share in addition a \textit{secret-key}. Then using their secret correlations Alice can encode the message in a way that  (i) Bob can decode it in the case that Eve does not intercept the states
sent down the insecure quantum channel and (ii) Eve cannot distinguish the different
messages if she intercepts the sent states. We assume that the key is given by (several copies of) a quantum state $\ket{\psi_{ABE}}$ shared by Alice, Bob and Eve and the question is to find out what is the optimal rate at which the state can be used to encrypt classical or quantum messages.

This problem was first considered in the noiseless case, in which Alice and Bob share perfect classical key or EPR pairs \cite{BR03, AMTW00, Leu00}. In Ref. \cite{SW06}, in turn, Schumacher and Westmoreland analysed the case in which the key shared by Alice and Bob is a mixed bipartite quantum state $\psi_{AB}$, which is not correlated  with the eavesdropper. Interestingly, they found the optimal rate at which the state can be used as a one-time-pad for classical messages  to be given by the quantum mutual information of $\psi_{AB}$: $I(A : B)_{\psi} = S(A)_{\psi} + S(B)_{\psi} - S(AB)_{\psi}$. 

In \cite{BO10} we considered the general case, in which Alice and Bob have an arbitrary quantum state, in general correlated with Eve. We found that the optimal rate at which the state can be used as a one-time-pad for quantum information turns out to be given by Eq. (\ref{ssdistillableent}). It is intriguing that it is the symmetric-side channel assisted distillable entanglement that appears as the optimal rate, even though the problem makes no mention in any way of the symmetric-side channel. 

The proof of our result reveals an interesting aspect of this task: the insecure quantum channel is only ever used to simulate a symmetric-side channel, meaning that in the optimal protocol Alice first locally simulates a symmetric-side channel, sends through the insecure channel the output part of the symmetric-side channel which would go to Bob, and traces out the part that would go to Eve. It thus follows that there is no difference if Alice and Bob are connected by an insecure ideal channel or a symmetric-side channel! 

We can therefore consider the quantum one-time-pad as an operational setting where the idea of a symmetric-side channel as public quantum communication naturally appears (though in an indirect manner).

\subsection{Superactivation of the Channel Capacity}
\label{ss:superactivation}

There is another line of investigation in which symmetric-side channels have been
shown very useful: in exhibiting examples of non-additivity of the quantum channel capacity \cite{SY08}. By the no-cloning theorem \cite{WZ82, Die82}, the symmetric-side channel can be seen to have zero quantum capacity. However, in \cite{SY08} Smith and Yard noted that a consequence of Eq. (\ref{ssdistillableent}) and the formula of \cite{DW05} for the one-way distillable secret-key rate ($K_{\rightarrow}$) is 
\begin{equation} \label{ssdistilableveruskey}
D_{ss}(\psi_{AB}) \geq K_{\rightarrow}(\psi_{AB})/2,
\end{equation} 
for all bipartite states $\psi_{AB}$.
The equation above is striking because there are examples of states for which the distillable entanglement is zero, but the distillable secret-key is not \cite{HHHO05, HPHH08, HHHO09} . In this way, and by considering quantum channels which generate such states, we find an example of two quantum channels (the symmetric-side channel and the other channel which can only produce states with zero distillable entanglement, but some with positive distillable key) each with zero quantum capacity, but whose tensor product has positive quantum capacity. This effect has been termed the \textit{superactivation} of the quantum capacity. 

Equation (\ref{ssdistilableveruskey}) shows a curious property of the symmetric-side channel: it allows the conversion of secret-key into EPR pairs (at half the rate). An interesting question, raised already in \cite{SY08} and further explored in \cite{SS08, LWZG09, SS09}, asks whether there is a more fundamental relation between entanglement and secrecy in the presence of symmetric-side channels. For instance, might the distillable entanglement and distillable secret-key, when assisted by symmetric-side channels, become the same? Although it is rather unlikely that this is the case (see the remark after the proof of Theorem \ref{theoallthesame}), here we will show that a relaxed version of the statement \textit{is} true. That brings us to the final concept that we will touch in this work.

\subsection{Mutual Independence}
\label{ss:mutualindep}

The definition of secret-key consists of two requirements: (i) Alice and Bob systems should be 
classical, and perfectly correlated and (ii) their state should not be correlated in any way with the eavesdropper.  A relaxed and fully quantum definition of private correlations has recently
been introduced~\cite{HOW09}, in which only the second requirement 
is kept. Then given a bipartite quantum state $\psi_{AB}$, the degree of (potentially noisy) private correlations of Alice and Bob, termed \textit{mutual independence} ($I_{\text{ind}}(\psi_{AB})$), is given by (half) the mutual information of a state extracted by Alice and Bob  which is product with Eve's state, who is assumed to hold a purifying state for $\psi_{AB}$.  Their actions are on asymptotically many copies of $\psi_{AB}$
and one can consider the mutual independence under different types of
operations e.g. by local operations and classical communication.

An operational significance of this new quantity was given in \cite{HOW09}: the sum of quantum communication by Alice and Bob required to send their state to a receiver, in the presence of free entanglement, is given in terms of
the mutual independence rate with no classical communication. 
Given the view of mutual independence as a more relaxed form of private correlations than secret-key, we might ask whether a similar relation as in Eq. (\ref{ssdistilableveruskey}) holds. We will show that this is indeed the case and that the relation turns out to be actually stronger than with secret-key .

\subsection{Our Results}
\label{ss:results}

Our first contribution is to introduce an even more relaxed notion of private correlations, which we call \textit{weak mutual independence} (see section \ref{backdef}). Its definition is almost the same as that of mutual independence, but here we only require that Alice's state is completely decoupled from Eve's. In the setting where no classical communication is allowed, the optimal protocol is just for Alice to split her system in two and trace out one of them, making herself product with Eve and at the same time trying to retain as much mutual information as possible with Bob. We can then see this quantity as a measure of Alice's ability to perform quantum privacy amplification against the eavesdropper. 

In section \ref{oneformulafitsall} we derive an entropic capacity formula (alas, a regularized one) for weak mutual independence in the zero and one-way classical communication cases, something that remains an open question for mutual independence. Our formula turns out to be a direct generalization of both the formulae for (one-way) distillable entanglement and distillable secret-key of Devetak and Winter \cite{DW05}. 

Our main result, presented in section \ref{capacitiesareeuqal}, is the following: when assisted by a symmetric-side channel, the weak mutual independence rate $(W_{\text{ind}, ss}$), the mutual independence rate ($I_{\text{ind}, ss}$), and the distillable entanglement ($D_{ss}$) become the same, i.e. for every state $\psi_{AB}$
\begin{equation} \label{allareequalintro}
W_{\text{ind}, ss}(\psi_{AB}) = I_{\text{ind}, ss}(\psi_{AB}) = D_{ss}(\psi_{AB}).
\end{equation}

We note that an analogous equation holds true classically, if we replace distillable entanglement by distillable secret-key and redefine the two mutual independence rates removing the half factor presented in the quantum case. Indeed, the rate of weak mutual independence which can be attained classically is at least as large as Eq. (\ref{ssdistillableent}), as one can even get secret-key at this rate. But it also holds that the weak mutual independence rate cannot be larger than Eq. (\ref{ssdistillableent}). An optimal protocol for weak mutual independence by public communication consists of two steps: Alice first applies a transformation to her random variable (given by $n$ realizations of the distribution $P_{XYZ}$) $X_n \rightarrow V_n$ obtaining $V_n$ and then communicate part of it to Bob and Eve, which in turn we can model as an application of a map $V_{n} \rightarrow U_{n}$ and the communication of the random variable $U_{n}$. As the protocol extracts mutual independence, we must have that, asymptotically, $I(V_{n} : Z_{n}U_{n}) \rightarrow 0$, since Alice's final random variable must be decoupled from Eve's. Therefore the weak mutual independence rate is bounded as follows
\begin{eqnarray}
\frac{1}{n} I(V_n : Y_n U_n) &\lesssim& \frac{1}{n} I(V_n : Y_n U_n) - \frac{1}{n}I(V_n : Z_nU_n) \nonumber \\ &=& \frac{1}{n} I(V_n : Y_n |U_n) - \frac{1}{n}I(V_n : Z_n|U_n) \nonumber \\ &\leq& C(P_{XYZ}),
\end{eqnarray}
where the equality in the second line follows from a simple entropic manipulation and the inequality in the last line from the additivity of $C(P_{XYZ})$.    

Eq. (\ref{allareequalintro}) is also key for our result in \cite{BO10} on the quantum one-time-pad. A protocol for sending classical messages would be to first distill mutual independence using the insecure channel to simulate a symmetric-side channel.  One then is
in the situation considered by Schumacher and Westmoreland \cite{SW06}, where initially Alice 
and Bob are decoupled from Eve, and one can then implement their protocol.  
That the information that goes through the insecure channel in the first part of the protocol does not leak information to Eve can be seen as follows: Alice locally simulates the symmetric
side channel which she would have used to extract mutual independence, and then sends Bob's output to him through the insecure
quantum channel and traces out the part that would go to Eve.  If Eve intercepts the state, then
because of the symmetry of Bob and Eve in the symmetric-side channel, she is just getting
the information which would have anyway gotten to her in the case where there was actually
a symmetric side channel, and therefore, she has to be decoupled from Alice and Bob's final state.

Finally, Eq. (\ref{allareequalintro}) is also interesting in the context of superactivation of the quantum capacity or distillable entanglement. For one thing, it shows that when looking for more superactivation protocols with the symmetric-side channel, one can focus on the rather indiscriminate task of making part of Alice's state product with the environment. 
In section \ref{superactivation} we show another connection of superactivation with mutual independence, which relates the weak mutual independence rate \textit{without} assistance by any side channel, with the  maximum coherent information (whose regularization gives the distillable entanglement) achievable with the assistance of an erasure channel. This is a channel obtained by a particular encoding of the symmetric-side channel and is the one actually used in all concrete examples of non-additivity found so far \cite{SY08, SS08, LWZG09, SS09, Opp09}
. This suggests that weak mutual independence might not be increased by symmetric-side channels, which would be a considerable improvement of Eq. (\ref{allareequalintro}), but which we leave as an open problem.

\section{Background and Definitions} \label{backdef}

The \textit{symmetric-side channel} is defined by the property that it maps the input state symmetrically to the receiver and the environment. For completeness,
and in detail, we follow Ref. \cite{SSW08}: consider the symmetric subspace ${\cal S}_d$ between two $d$-dimensional Hilbert spaces spanned by the basis:
\begin{equation}
\ket{(i, j)} := 
\begin{cases}
\frac{1}{\sqrt{2}}(\ket{i, j} + \ket{j, i}) & \text{for} \hspace{0.3 cm} i < j \\
\ket{i, i} & \text{for} \hspace{0.3 cm} i = j.
\end{cases}
\end{equation}
Let $U_{ss, d} : A \hookrightarrow BE$, with $A \cong \mathbb{C}^{d(d+1)/2}$ and $BE \cong {\cal S}_d$, be an isometry which maps a basis of the $d(d+1)/2$-dimensional Hilbert space into the 
$\ket{(i, j)}$, in some order. Then the $d$-dimensional symmetric-side channel is defined as
\begin{equation}
\Lambda_{ss, d}(\rho) := \tr_{E} \hspace{0.1 cm} U_{ss, d} \rho U_{ss, d}^{\cal y}.
\end{equation}

In the (fifty-fifty) \textit{erasure channel}, with probability half the quantum information is sent to the receiver intact, while with probability half the information is completely lost to the environment and the receiver gets an error flag. Let $U_{e, d}\ : A \hookrightarrow BE$, with $A \cong\mathbb{C}^d$ and $B, E \cong \mathbb{C}^{d+1}$, be the isometry defined as
\begin{equation}
U_{e, d} \ket{i} = \frac{1}{2}(\ket{i, e} + \ket{e, i})
\end{equation}
for all $i \in \{0, ..., d-1 \}$, with $\ket{e} \equiv \ket{d}$. Then 
the erasure channel is given by
\begin{eqnarray}
\Lambda_{e, d}(\rho) &:=& \tr_{E} \hspace{0.1 cm} U_{e, d} \rho U_{e, d}^{\cal y} \nonumber \\ &=& \frac{1}{2}\rho +\frac{1}{2}\ket{e}\bra{e}.
\end{eqnarray}

For a channel $\Lambda(\rho) := \tr_E(U \rho U^{\cal y})$ we define its complementary channel as $\Lambda^{c}(\rho) := \tr_B(U \rho U^{\cal y})$. We say $\Lambda$ is \textit{anti-degradable} if there is another quantum operation ${\cal E}$ such that $\Lambda = {\cal E} \circ \Lambda^{c}$, i.e. Eve can simulate the channel from Alice to Bob by applying the operation ${\cal E}$ \cite{DS05}. Both the symmetric-side channel and the erasure channel are examples of anti-degradable channels and all such channels have zero quantum capacity \cite{DS05}.   

Note that one can use a $(d+1)$-dimensional symmetric-side channel to simulate a $d$-dimensional erasure channel in a very simple way: the sender only have to encode the $d$-dimensional input space into the subspace which gets mapped by $U_d$ to the subspace $\{ \ket{(i, j)} : i \in \{ 0, ..., d-1 \}, j = d \}$. In fact this is a particular case of a more general property of the symmetric-side channel, which can be used to simulate any other anti-degradable channel by appropriate encoding and decoding operations.

\begin{theorem} \label{ssanti}
For every anti-degradable channel $\Lambda$ there is an integer $d$ and quantum operations ${\cal E}$ and ${\cal F}$ such that  
\begin{equation}
\Lambda = {\cal E} \circ \Lambda_{ss, d} \circ {\cal F}.
\end{equation}
\end{theorem}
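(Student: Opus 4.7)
The plan is to exhibit an isometric dilation $V: A\hookrightarrow \mathcal{S}_d\subset\mathbb{C}^d\otimes\mathbb{C}^d$ of $\Lambda$ whose image lies in the symmetric subspace, and then to recover $\Lambda$ by an easy post-processing. Since $U_{ss,d}$ is a bijection between its input space and $\mathcal{S}_d$, any such $V$ factors as $V = U_{ss,d}\circ M$ for an isometry $M: A \hookrightarrow \mathbb{C}^{d(d+1)/2}$; the preprocessing is then $\mathcal{F}(\rho) = M\rho M^{\dg}$, and the post-processing $\mathcal{E}$ will only have to identify Bob's output space inside $\mathbb{C}^d$.

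To construct such a $V$, I would exploit the anti-degradability hypothesis directly. Let $V_0: A\hookrightarrow B\otimes E$ be a Stinespring dilation of $\Lambda$, and let $W: E\hookrightarrow B\otimes F$ be a Stinespring dilation of some channel $\mathcal{E}_0: E\to B$ satisfying $\Lambda = \mathcal{E}_0\circ\Lambda^c$. The composition $V_1:=(\1_B\otimes W)V_0: A\hookrightarrow B_1\otimes B_2\otimes F$ then has the feature that the $B_1$- and the $B_2$-reduction of $V_1\rho V_1^{\dg}$ both equal $\Lambda(\rho)$, though the joint state on $B_1B_2$ need not be swap-symmetric. To coherently symmetrize it I adjoin two ancilla qubits $G_1,G_2$ and a second copy $F_2$ of $F$, and define
\[
V_\chi\ket{\psi} = \tfrac{1}{\sqrt{2}}\bigl[(V_1\ket{\psi})_{B_1B_2F_1}\ket{0}_{F_2}\ket{01}_{G_1G_2} + (S_{B_1B_2}V_1\ket{\psi})_{B_1B_2F_2}\ket{0}_{F_1}\ket{10}_{G_1G_2}\bigr].
\]
The orthogonality $\langle 01|10\rangle=0$ on $G_1G_2$ makes $V_\chi$ a genuine isometry, and a short calculation shows that $V_\chi\ket{\psi}$ is invariant under the bare swap exchanging the combined registers $(B_1F_1G_1)\leftrightarrow(B_2F_2G_2)$. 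Its image therefore lies in the symmetric subspace of $\mathbb{C}^d\otimes\mathbb{C}^d$ with $d=2\cdot\dim(B)\cdot\dim(F)$.

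The post-processing can then be read off. Applying $\Lambda_{ss,d}$ to $\mathcal{F}(\rho)$ traces out the second block $(B_2F_2G_2)$ of $V_\chi\rho V_\chi^{\dg}$; the same $\langle 01|10\rangle=0$ kills the cross term, leaving an equal mixture of two branches on $(B_1F_1G_1)$: one of the form $\eta_{B_1F_1}\otimes\ket{0}\bra{0}_{G_1}$ with $\tr_{F_1}\eta = \Lambda(\rho)$, and one of the form $\Lambda(\rho)_{B_1}\otimes\ket{0}\bra{0}_{F_1}\otimes\ket{1}\bra{1}_{G_1}$. Taking $\mathcal{E}$ to be the partial trace over $F_1G_1$, together with the natural identification of $B_1\subset\mathbb{C}^d$ with $B$, collapses both branches to $\Lambda(\rho)$ and yields the advertised decomposition $\Lambda = \mathcal{E}\circ\Lambda_{ss,d}\circ\mathcal{F}$.

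The step I expect to require the most care is the coherent symmetrization producing $V_\chi$. A naive one-qubit symmetrization $\tfrac{1}{\sqrt{2}}(V_1\ket{\psi}\ket{0}_G + S_{B_1B_2}V_1\ket{\psi}\ket{1}_G)$ is only invariant under the twisted swap $S_{B_1B_2}\otimes X_G$, not under the bare swap of the two combined registers, so its image lies in a swap-twisted subspace rather than in the symmetric subspace proper. The $\ket{01}$/$\ket{10}$ ancilla pattern together with the duplicated $F$-register is what arranges for the bare global swap to act trivially on every input; the main points needing a careful write-up are the swap invariance of $V_\chi\ket{\psi}$ and the two cross-term cancellations above, both of which rest on the orthogonality of the two $G$-branches.
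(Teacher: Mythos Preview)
Your argument is correct and follows the same strategy as the paper's appendix: use anti-degradability to obtain a dilation with two Bob-type marginals both equal to $\Lambda(\rho)$, coherently symmetrize with a controlled swap plus flag ancillas so that the image lands in a symmetric subspace, then factor through $U_{ss,d}$ and trace out the extra registers on Bob's side. Your implementation is slightly more direct---you build the symmetrizing isometry $V_\chi$ explicitly and duplicate the $F$-register, whereas the paper nests a second $U_{ss,|G|}$ on the inner environment and invokes Uhlmann's theorem with a reference system---and your careful discussion of why the $\ket{01}/\ket{10}$ flag pattern (rather than a single qubit, or $\ket{00}/\ket{11}$) is needed for genuine invariance under the bare swap of the two composite registers is a point the paper's one-line symmetry assertion glosses over.
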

\noindent The proof of the theorem is given in Appendix \ref{app1}. 

We now turn to the four main quantities which we will be concerned with: distillable entanglement \cite{DW05}, secret key \cite{DW05}, mutual independence \cite{HOW09}, and weak mutual
independence. They are all given by the optimization of a certain cost
function over a restricted set of operations. Here we are interested in the following classes of operations:

\begin{itemize}
\item 
Local operations (by Alice and Bob), without communication. We denote this class by $\emptyset$. 
\item Local operations and one-way classical communication from Alice to Bob. The class will be denoted by $\rightarrow$.
\item Local operations and forward communication by (unlimited many uses of) an erasure channel. The class is denoted by $e$.
\item Local operations and forward communication by (unlimited many uses of) a symmetric-side channel. The class is denoted by $ss$.
\end{itemize}
In the following let ${\cal C}$ be one of the class of operations defined above. 

\textit{Distillable Entanglement:} Given a (mixed) bipartite state $\psi_{AB}$, a ${\cal C}$-protocol for entanglement distillation is formed by a sequence of maps $\Lambda^{(n)}$ from ${\cal C}$ such that 
\begin{equation}
\lim_{n \rightarrow \infty}\Vert \Lambda^{(n)}(\psi_{AB}^{\otimes n}) - \Phi(M_n) \Vert_1 = 0,  
\end{equation}
where $\Phi(M_n)$ is the $M_n$-dimensional maximally entangled state given by 
\begin{equation}
\Phi(M_n) = \frac{1}{M_n}\sum_{i=0}^{M_n-1}\sum_{j=0}^{M_n-1}\ket{i,i}\bra{j,j}.
\end{equation}

\begin{definition}
(distillable entanglement) Given a state $\psi_{AB}$ and an entanglement distillation ${\cal C}$-protocol ${\cal P} = \Lambda^{(n)}$, define the rate
\begin{equation}
R({\cal P}, \psi_{AB}) := \liminf_{n \rightarrow \infty} \frac{\log M_n}{n}.
\end{equation}
The ${\cal C}$-distillable entanglement of $\psi_{AB}$ is given by
\begin{equation}
D_{\cal C}(\psi_{AB}) := \sup_{\cal P} R({\cal P}, \psi_{AB}),
\end{equation}
\end{definition}
\vspace{0.2 cm}

\textit{Distillable Secret-Key:} A  ${\cal C}$-protocol for secret-key distillation is a sequence of maps $\Lambda^{(n)}$ from ${\cal C}$ with the property that  
\begin{equation} 
\rho^{(n)}_{ABE} := \Lambda^{(n)}\otimes \id_{E}(\psi_{ABE}^{\otimes n}),
\end{equation}
with $\ket{\psi_{ABE}}$ a purification of $\psi_{AB}$, is such that
\begin{equation}
\lim_{n \rightarrow \infty} \left \Vert \rho^{(n)}_{ABE} - \frac{1}{M_n}\sum_{i=0}^{M_n-1}\ket{i,i}\bra{i,i} \otimes \rho^{(n)}_{E} \right \Vert_1 = 0.
\end{equation}

\begin{definition}
(distillable secret-key) Given a state $\psi_{AB}$, consider a ${\cal C}$-protocol for key distillation ${\cal P} = \Lambda^{(n)}$ and define the rate
\begin{equation}
R({\cal P}, \psi_{AB}) := \liminf_{n \rightarrow \infty} \frac{\log M_n}{n}.
\end{equation}
The ${\cal C}$-distillable secret-key of $\rho_{AB}$ is given by
\begin{equation}
K_{\cal C}(\psi_{AB}) := \sup_{\cal P} R({\cal P}, \psi_{AB}),
\end{equation}
\end{definition}
Note that distillation of secret key via public communication is completely equivalent
to distilling a class of states $\gamma_n$ called {\it pbits} via local operations
and classical communication \cite{HHHO05}. The states $\gamma_n$ are a broader class of states than pure entangled ebits, which allow Alice and Bob to get (almost) perfect classical correlations unknown to an adversary, who has a purifying system of their state.
\vspace{0.2 cm}

\textit{Mutual Independence:} Following \cite{HOW09}, we call a ${\cal C}$-protocol for extracting mutual independence from $\psi_{AB}$ any 
sequence of maps $\Lambda^{(n)}$ from the class ${\cal C}$ with the property that
\begin{equation} 
\label{defweaknew}
\rho^{(n)}_{ABE} := \Lambda^{(n)}\otimes \id_{E}(\psi_{ABE}^{\otimes n})
\end{equation}
is such that
\begin{equation}
\lim_{n \rightarrow \infty} \Vert \rho^{(n)}_{ABE} - \rho^{(n)}_{AB} \otimes \rho^{(n)}_{E} \Vert_1 = 0.
\end{equation}

\textit{Weak Mutual Independence:} We define a new quantity, which is a weaker notion of mutual independence where we only require Alice to be product with Eve. We call a
protocol for extracting weak mutual independence any sequence of maps $\Lambda^{(n)}$from ${\cal C}$ such that
\begin{equation}
\lim_{n \rightarrow \infty}\Vert \rho^{(n)}_{AE} - \rho^{(n)}_{A} \otimes \rho^{(n)}_{E} \Vert_1 = 0.
\end{equation}
with $\rho^{(n)}_{ABE}$ defined in Eq. (\ref{defweaknew}) above. 

\begin{definition} \label{mutulandweakmutual}
(mutual independence and weak mutual independence) Given a state $\psi_{AB}$, consider a ${\cal C}$-protocol for extracting 
(weak) mutual independence ${\cal P} = \Lambda^{(n)}$. Define the rate
\begin{equation}
R({\cal P}, \psi_{AB}) := \liminf_{n \rightarrow \infty} \frac{1}{2n} I(A : B)_{\Lambda^{(n)}(\psi_{AB}^{\otimes n})}.
\end{equation}
Then we define the ${\cal C}$-mutual independence rate of $\psi_{AB}$ as
\begin{equation}
I_{\text{ind}, {\cal C}}(\psi_{AB}) := \sup_{\cal P} R({\cal P}, \psi_{AB}),
\end{equation}
while the ${\cal C}$-weak mutual independence rate of $\rho_{AB}$ is defined as
\begin{equation}
W_{\text{ind}, {\cal C}}(\psi_{AB}) := \sup_{\cal P} R({\cal P}, \psi_{AB}),
\end{equation}
\end{definition}

\section{One formula fits all} \label{oneformulafitsall}

We have seen that mutual independence and its weak variant can be seen as extensions of distillable entanglement, or distillable secret key, to a setting in which the condition that Alice and Bob's systems are perfectly correlated (either quantumly as ebits or classically as pbits \cite{HHHO09}) is dropped, and only the privacy condition that Eve's state is factored out is required.
In this section we will derive a capacity formula for weak mutual independence. This turns out to be a generalization of both the formula for one-way distillable entanglement and one-way distillable secret-key derived by Devetak and Winter \cite{DW05}. 

Consider the following classes of completely positive (CP) maps:
\begin{itemize}
\item $RO$ (\textit{rank-one CP maps}): all maps $\Lambda$ of the form $\Lambda(\sigma) = A \sigma A^{\cal y}$.
\item $QC$ (\textit{quantum-to-classical maps}): all maps $\Lambda$ of the form $\Lambda(\sigma) = \sum_j \tr(A_j \sigma)\ket{j}\bra{j}$, with $A_j\geq 0$ and $\sum_j A_j \leq \id$.
\item $CP$ (\textit{general CP maps}): this is the class formed by all CP maps.
\end{itemize}
The next theorem shows that (one-way) distillable entanglement, distillable secret key, and weak mutual independence are given by the same quantity optimized over the three classes of operations defined above, respectively.

\begin{theorem}
For a pure state $\ket{\psi_{ABE}}$ and a class of operations ${\cal C}$ we define
\begin{equation}
G_{{\cal C}}(\psi_{AB}) :=  \max_{\rho \in {\cal C}} \frac{1}{2} \left(I(a : B|\alpha)_{\rho} - I(a : E |\overline{\alpha})_{\rho} \right)
\end{equation}
with
\begin{multline*}
    \rho_{aB\alpha E \overline{\alpha}}  := \\
       \sum_k p_k ({\cal E}_k \otimes \id_{BE})(\psi_{ABE}) \otimes \ket{k, k}_{\alpha \overline{\alpha}}\bra{k, k}, 
\end{multline*}
and the maximization taken over all sets of CP maps $\{ {\cal E}_k : A \hookrightarrow a \}_k$ contained in the class ${\cal C}$ and whose elements sum up to a quantum operation ${\cal\ E} := \sum_k p_k {\cal E}_k$. Let 
\begin{equation}
G_{\cal C}^{\infty}(\psi_{AB}) := \lim_{n \rightarrow \infty} \frac{1}{n}G_{\cal C}(\psi_{AB}^{\otimes n}).
\end{equation}

Then
\begin{equation} \label{DistillableEntRO}
 D_{\rightarrow}(\psi_{AB}) = {G}_{RO}^{\infty}(\psi_{AB}),
\end{equation}
\begin{equation} \label{DistillableKeyQC}
K_{\rightarrow}(\psi_{AB}) = 2G_{QC}^{\infty}(\psi_{AB}),
\end{equation}
and
\begin{equation} \label{weakmutualCP}
W_{\text{ind}, \rightarrow}(\psi_{AB}) = G_{CP}^{\infty}(\psi_{AB}).
\end{equation}
\end{theorem}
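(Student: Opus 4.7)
My plan is to treat the three equalities in parallel, since the structural part of the argument is the same for all. First, I would establish a normal form for one-way protocols: Alice applies a joint instrument to her $n$ copies of $\psi_{AB}$, broadcasts the outcome $k$, and Bob applies a classical-controlled local operation. Because Bob's step is local to him and can only decrease the Alice--Bob mutual information (by data processing), for the purpose of computing the relevant functionals I can take the final state to be exactly the $\rho_{aB\alpha E\bar\alpha}$ appearing in the statement, with $\{{\cal E}_k\}$ being Alice's post-measurement CP maps. The three tasks then differ only in which class ${\cal C}$ the maps $\{{\cal E}_k\}$ must belong to: rank-one instruments leave each $\rho_k$ pure on $aBE$, and a short calculation shows $\tfrac{1}{2}(I(a\!:\!B)-I(a\!:\!E))_{\rho_k}=I_c(a\rangle B)_{\rho_k}$, so $G_{RO}^\infty$ matches the Devetak--Winter coherent-information formula for $D_\rightarrow$; QC instruments make $a$ classical, and $2G_{QC}$ reduces to the Devetak--Winter private information expression for $K_\rightarrow$. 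Equations (\ref{DistillableEntRO}) and (\ref{DistillableKeyQC}) therefore follow from \cite{DW05}, and the genuinely new content is the third equality.

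For the direct part $W_{\text{ind},\rightarrow}\geq G_{CP}^\infty$, I would fix an arbitrary decomposition ${\cal E}=\sum_k p_k{\cal E}_k$ with general CP $\{{\cal E}_k\}$. The protocol has Alice apply ${\cal E}^{\otimes n}$ to $\psi^{\otimes n}$, publicly announce the outcome sequence $k^n$, and then perform Haar-random unitaries on her $a^n$, splitting it as $a^n=a'a''$ and tracing out $a''$. Using the decoupling theorem applied conditionally on the classical register $\bar\alpha^n$, the resulting $\rho_{a'E^n\bar\alpha^n}$ becomes $o(1)$-close to product as long as $\log|a''|\gtrsim \tfrac{n}{2}I(a\!:\!E\bar\alpha)$, and choosing $\log|a'|$ equal to $\tfrac{n}{2}(I(a\!:\!B|\alpha)-I(a\!:\!E|\bar\alpha))$ up to $o(n)$ achieves decoupling while, by the iid structure and data processing, leaving $I(a'\!:\!B^n\alpha^n)\geq n(I(a\!:\!B|\alpha)-I(a\!:\!E|\bar\alpha))-o(n)$; regularising over blocks and optimising over $\{{\cal E}_k\}$ then delivers $W_{\text{ind},\rightarrow}\geq G_{CP}^\infty$. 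For the converse, the weak-mutual-independence condition $\rho_{aE\bar\alpha}\approx\rho_a\otimes\rho_{E\bar\alpha}$, combined with Alicki--Fannes--Winter continuity of mutual information, forces both $I(a\!:\!E|\bar\alpha)\to 0$ and $I(a\!:\!\bar\alpha)\to 0$; since $\alpha$ and $\bar\alpha$ are perfectly classically correlated, the latter also yields $I(a\!:\!\alpha)\to 0$, so the attained rate is $\tfrac{1}{2n}I(a\!:\!B\alpha)=\tfrac{1}{2n}I(a\!:\!B|\alpha)+o(1)=\tfrac{1}{2n}(I(a\!:\!B|\alpha)-I(a\!:\!E|\bar\alpha))+o(1)\leq \tfrac{1}{n} G_{CP}(\psi^{\otimes n})+o(1)$, yielding $W_{\text{ind},\rightarrow}\leq G_{CP}^\infty$ on letting $n\to\infty$.

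The hardest step will be the achievability direction. The Devetak--Winter proofs for $D_\rightarrow$ and $K_\rightarrow$ exploit purity (rank-one) or classicality (QC) of the post-measurement states, but for general CP maps neither is available, so one must replace the HSW + hashing + privacy-amplification machinery of \cite{DW05} with a uniform decoupling argument. Invoking the smooth-entropy decoupling theorem on the post-instrument state conditioned on the public register and evaluating conditional entropies carefully enough to hit exactly the rate $\tfrac{1}{2}(I(a\!:\!B|\alpha)-I(a\!:\!E|\bar\alpha))$, rather than a weaker bound, is where most of the care will be needed.
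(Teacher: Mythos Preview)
Your proposal is correct and follows essentially the same route as the paper: reduce the first two equalities to the Devetak--Winter formulae via the purity/classicality of the post-instrument states, and for the third use decoupling for achievability and the asymptotic product condition for the converse.

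Two small points where the paper is simpler than what you sketch. First, in the achievability you do not choose $\log|a'|$ to equal the target rate; you only fix $\log|a''|\approx\tfrac{n}{2}I(a\!:\!E\bar\alpha)$ so that decoupling succeeds, and then the chain rule plus $I(a''\!:\!B\alpha\,|\,a')\le 2\log|a''|$ yields $I(a'\!:\!B\alpha)\ge n\bigl(I(a\!:\!B\alpha)-I(a\!:\!E\bar\alpha)\bigr)$, which is the bound you want. Because the post-instrument state $\phi_{aB\alpha E\bar\alpha}$ is i.i.d.\ over the $n$ copies, the ordinary i.i.d.\ decoupling lemma of \cite{ADHW08} applied to $\phi_{a:E\bar\alpha}^{\otimes n}$ (after a typical projection on $a^n$) suffices; there is no need for smooth entropies or for conditioning on $\bar\alpha^n$, so the step you flag as hardest is in fact routine. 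Second, in the converse you can skip the separate argument that $I(a\!:\!\alpha)\to 0$: since $\alpha$ and $\bar\alpha$ are identical classical copies, $I(a\!:\!\alpha)=I(a\!:\!\bar\alpha)$ always, hence $I(a\!:\!B\alpha)-I(a\!:\!E\bar\alpha)=I(a\!:\!B|\alpha)-I(a\!:\!E|\bar\alpha)$ identically, and the weak-independence condition gives $I(a\!:\!E\bar\alpha)\to 0$ directly.
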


\begin{proof}
Equation (\ref{DistillableKeyQC}) follows directly from Ref. \cite{DW05}, while Eq. (\ref{DistillableEntRO}) is a simple rearrangement of the formula found in Ref. \cite{DW05}. We only have to note that because each $\psi_{k, ABE} := ({\cal E}_k \otimes \id_{BE})(\psi_{ABE})$ is a pure state, it follows that
\begin{eqnarray}
&& G_{RO}(\psi_{AB})  \\ &=& \lim_{n \rightarrow \infty} \sum_k p_k \frac{1}{2}(I(a:B)_{\psi_k} - I(a:E)_{\psi_k}) \nonumber \\ &=& \lim_{n \rightarrow \infty}\sum_k p_k I(a\rangle B)_{\psi_k} \nonumber
\end{eqnarray}
which is Devetak-Winter formula for the one-way distillable entanglement.
 
So it remains to prove Eq. (\ref{weakmutualCP}). Let us start showing the achievability of $G_{CP}(\psi_{AB})$, which by block coding implies that $G_{CP}^{\infty}(\psi_{AB})$ is achievable as well. The protocol has three steps. In the first, Alice applies the operation
\begin{equation}
{\cal F}_(\sigma) = \sum_k p_k {\cal E}_k(\sigma) \otimes \ket{k}\bra{k}_X,
\end{equation}
to $n$ copies of her share of the state, communicates the classical information in the register $X$ to Bob and Eve and traces out $X$, obtaining $n$ copies of the state 
\begin{multline}\label{stateafterccW}
\phi_{aB\alpha E \overline{\alpha}} \\= \sum_k p_k ({\cal E}_k \otimes \id_{BE})(\psi_{ABE})\otimes \ket{k,k}_{\alpha \overline{\alpha}}\bra{k,k},
\end{multline}
where $\alpha$ and $\overline{\alpha}$ are held by Bob and Eve, respectively.

In a second step Alice projects her system into its \textit{typical subspace} \cite{Sch95}, outputing an error flag  when the projection fails and getting the state $\phi_{aB\alpha E \overline{\alpha}}^n$.

Finally, Alice splits her $a_n$ (which labels the system held by Alice) into two registers $a_{1, n}$ and $a_{2, n}$ of size 
\begin{equation} 
\label{sizea2}
\lim_{n \rightarrow \infty}\frac{1}{n}\log |a_{2, n}| = \lim_{n \rightarrow \infty} \frac{1}{2n} I(a : E\overline{\alpha})_{\phi_{aE\overline{\alpha}}}
\end{equation}
in such a way that 
\begin{equation}
\lim_{n \rightarrow \infty} \Vert \phi_{a_{1, n}  E\overline{\alpha}} - \phi_{a_{1, n}} \otimes \phi_{E\overline{\alpha}} \Vert_1 = 0. 
\end{equation}
That such a splitting always exists is shown in Lemma \ref{decoupling}. As Alice's final state is product with Eve's, the protocol extracts weak mutual independence. The rate is given by
\begin{eqnarray} 
&&\lim_{n \rightarrow \infty} \frac{1}{2n} I(a_{1, n} : B)_{\phi} \\   
&=&\lim_{n \rightarrow \infty}\left(\frac{1}{2n} I(a_{1, n} a_{2, n} : B)_{\phi} - \frac{1}{2n} I(a_{2, n} : B | a_{1, n})_{\phi} \right), \nonumber
\end{eqnarray}
where we used the chain rule. From the bound
\begin{equation} \label{boundcondmutualinfo}
I(a_{2, n} : B | a_{1, n}) \leq 2 \log(|a_{2, n}|)
\end{equation}
and Eq. (\ref{sizea2}) we then get
\begin{eqnarray}
&&\lim_{n \rightarrow \infty} \frac{1}{2n} I(a_{1, n} : B)_{\phi} \\ & \geq&  \lim_{n \rightarrow \infty} \left( \frac{1}{2n} I(a : B)_{\phi} - \frac{1}{2n} I(a : E)_{\phi} \right) \nonumber,
\end{eqnarray}
which shows that $W_{\text{ind}, \rightarrow}(\psi_{AB}) \geq G_{CP}(\psi_{AB})$. 

The converse follows almost directly from the definition of $W_{\text{ind}, \rightarrow}$. Consider an optimal protocol for extracting weak mutual independence as in Def. (\ref{mutulandweakmutual}). The optimal 1-way LOCC operations $\Lambda^{(n)}$ can be assumed to have the form 
\begin{multline}
\Lambda^{(n)}(\sigma) = \\ \sum_k q_{k, n} ({\Lambda}_{k, n}\otimes \id_{BE})(\sigma)\otimes \ket{k,k}_{\alpha \overline{\alpha}}\bra{k,k}
\end{multline}
for a quantum instrument $\{ q_{k, n}, \Lambda_{k, n} \}$ (i.e.\ a set of CP maps implemented
with probability $q_{k,n}$). This is so because any action of Bob would only decrease his mutual information with Alice. Thus the optimal one-way LOCC protocol for $W_{{\text ind}, \rightarrow}$ consists of Alice applying an instrument to her system and communicating which CP map was implemented to Bob and Eve. Then
\begin{eqnarray}
&&W_{\text{ind}, \rightarrow}(\psi_{AB})  \\ 
&=& \liminf_{n \rightarrow \infty} \frac{1}{2n} I(A : B \alpha)_{\Lambda^{(n)}(\psi_{AB}^{\otimes n})}   \nonumber \\ 
&=& \liminf_{n \rightarrow \infty} \frac{1}{2n} \left( I(A : B\alpha)_{\Lambda^{(n)}(\psi_{AB}^{\otimes n})} - I(A : E\alpha)_{\Lambda^{(n)}(\psi_{AE}^{\otimes n})}\right) \nonumber \\
&\geq& G_{CP}^{\infty}(\psi_{AB}), \nonumber
\end{eqnarray}
where the the second equality follows from the fact that Alice's state is asymptotically product with Eve's, and the last equality comes from the definition of  $G_{CP}^{\infty}$.
\end{proof}

\vspace{0.2 cm}
\textbf{Remark:} Following the proof of the theorem it is also straightforward to derive a formula for the zero-way weak mutual independence:
\begin{equation}
W_{\text{ind}, \emptyset}(\psi_{AB}) = \lim_{n \rightarrow \infty} \frac{1}{n}W_{\text{ind}, \emptyset}^{(1)}(\psi_{AB}^{\otimes n})
\end{equation}
with
\begin{equation}
W_{\text{ind}, \emptyset}^{(1)}(\psi_{AB}) := \max_{A \hookrightarrow a \alpha} \frac{1}{2} \left( I(a : B) - I(a : E) \right), 
\end{equation}
where the maximization is taken over all isometries mapping $A$ to $a \alpha$. 

Also along very similar lines to the proof above, we get the following generalization of the theorem: for a channel $\Lambda$ define the quantity 
\begin{equation}
G_{\Lambda}(\psi_{AB}) := \lim_{n \rightarrow \infty} \lim_{m \rightarrow \infty} \frac{G^{(1)}(\Lambda^{\otimes m}, \psi_{AB}^{\otimes n})}{n}
\end{equation}
with
\begin{eqnarray}
&& G^{(1)}(\Lambda, \psi_{AB})  \\ 
&:=&\sup_{A\rightarrow a\alpha_1\alpha_2}\frac{1}{2}
\left[I(a:\Lambda(\alpha_1)B)_{\phi}-I(a:\Lambda^c(\alpha_1)E)_{\phi}\right] \nonumber
\end{eqnarray}
where $\Lambda^c$ is the conjugate channel of $\Lambda$, the optimization is taken over all isometries mapping $\ket{\psi}_{ABE}$ to $\ket{\psi}_{a \alpha_1 \alpha_2 B E}$, and the $''\Lambda(\alpha_1)''$ in the formula  is a shorthand for the state obtained by applying $\Lambda$ to the $\alpha_1$ register of $\ket{\phi}_{a \alpha_1 \alpha_2 BE}$. Then the distillable entanglement, secret-key capacity, and weak mutual independence capacity (zero-way, one-way, assisted by erasure channel, by a symmetric-side channel, or by any other channel) are
all specific cases of this formula for particular choices of the channel $\Lambda$.  

The following Lemma in proved in \cite{ADHW08}:

\begin{lemma} \label{decoupling}
(\cite{ADHW08} Decoupling Lemma) 
For every bipartite state $\psi_{AE}$ let $\psi_{A^nE^n}$ be defined as 
\begin{equation}
\ket{\psi_{A^nE^n}} := P_{n, \varepsilon_n} \ket{\psi_{AE}}^{\otimes n} / \Vert P_{n, \varepsilon_n} \ket{\psi_{AE}}^{\otimes n} \Vert,
\end{equation}
with $P_{n, \varepsilon_n}$ the projector onto the $\varepsilon_n$-typical subspace of $\ket{\psi_{AE}}^{\otimes n}$. Then for every sequence $\{ \varepsilon_n \}$ going to zero, there is a sequence of isometries $V_{n} : A_n \hookrightarrow A_{1, n} A_{2, n}$ such that 
\begin{equation}
\lim_{n \rightarrow \infty} \left \Vert \tr_{A_{1, n}} \left(V_{n}\psi_{AE}^{\otimes n} V_{n}^{\cal y}\right) - \tau_n\otimes\psi_E^{\otimes n} \right \Vert_1 = 0,
\end{equation}
with $\tau_n$ the maximally mixed state in $A_{2, n}$, and
\begin{equation}
\lim_{n \rightarrow \infty} \frac{ \log|A_{1, n}|}{n} = I(A : E)_{\psi}
\end{equation}
\end{lemma}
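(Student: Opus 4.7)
The plan is to prove the decoupling lemma via the standard Haar-random-unitary technique on the typical subspace, which is the route taken in \cite{ADHW08}. The argument decomposes into three ingredients that I would carry out in order: (i) reducing the problem to a near-i.i.d.\ state on a controlled-dimension subspace via typicality, (ii) applying a Haar-random unitary to that subspace and invoking the one-shot decoupling inequality, and (iii) matching the dimensions so that the register traced out ends up of size $nI(A:E)_\psi$.

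First I would use the asymptotic equipartition property to observe that $\|\psi_{A^nE^n} - \psi_{AE}^{\otimes n}\|_1 \to 0$, that the $A^n$-marginal of $\psi_{A^nE^n}$ is supported on a subspace of dimension at most $2^{n(H(A)+\delta_n)}$ with $\delta_n \to 0$, and that its spectrum is nearly flat on the typical eigenvectors. The isometry $V_n$ asked for in the lemma will be produced by composing the embedding of the $A^n$-typical subspace (possibly extended by an ancilla of fixed pure state) with a unitary constructed in the next step; the triangle inequality will then let us swap $\psi_{A^nE^n}$ back to $\psi_{AE}^{\otimes n}$ at vanishing asymptotic cost.

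Next, on the (possibly ancilla-extended) $A^n$-typical subspace, I would apply a Haar-random unitary $U$ and split the output as $A_{1,n} \otimes A_{2,n}$. The one-shot decoupling inequality then bounds the Haar expectation by
\begin{equation}
\mathbb{E}_U \bigl\| \tr_{A_{1,n}}\bigl(U\psi_{A^nE^n}U^\dagger\bigr) - \tau_{A_{2,n}}\otimes \psi_E^{\otimes n} \bigr\|_1 \leq \sqrt{|A_{2,n}| \cdot 2^{-H_2(A^n|E^n)_{\psi_{A^nE^n}}}},
\end{equation}
and a further application of AEP, this time to the conditional collision entropy, gives $H_2(A^n|E^n) \geq n(H(A|E)_\psi - \delta_n')$. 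The right-hand side therefore vanishes whenever $\log|A_{2,n}|/n$ stays strictly below $H(A|E)_\psi$; pushing $\log|A_{2,n}|/n$ up to $H(A|E)_\psi$ and combining with the dimension constraint $\log|A_{1,n}| + \log|A_{2,n}| \approx nH(A) + \log(\mathrm{ancilla})$ forces $\log|A_{1,n}|/n \to H(A) - H(A|E) = I(A:E)_\psi$, as required.

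Since the Haar expectation tends to zero, a specific unitary $U_n$ achieving vanishing decoupling error must exist for every sufficiently large $n$; one then takes $V_n$ to be that $U_n$ composed with the typical-subspace embedding. The main obstacle I anticipate is not the decoupling inequality itself (which is standard) but the dimension bookkeeping: one has to arrange the ancilla and the partition so that the rate on the discarded register comes out to exactly $I(A:E)_\psi$ and then verify that neither the typical-projection error nor the collision-entropy AEP correction spoils the trace-norm bound when $n \to \infty$. Once these dimensions are laid out the remainder of the argument is a direct consequence of the decoupling inequality and AEP.
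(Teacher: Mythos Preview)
The paper does not actually prove this lemma: it is stated as a quotation from \cite{ADHW08} and used as a black box. Your outline is precisely the argument of \cite{ADHW08} (typical projection, Haar-random unitary on the typical subspace, one-shot decoupling bound, AEP for the conditional collision entropy, dimension counting), so there is nothing to compare against in the present paper and your sketch is consistent with the cited source.
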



\section{Assisted Capacities Are Equal} \label{capacitiesareeuqal}

In this section we prove that under the assistance of an erasure channel or a symmetric-side channel, distillable entanglement, mutual independence, and weak-mutual independence become the same. This result is used in Ref. \cite{BO10} to show that these symmetric-side channel assisted capacities give the optimal rate at which a quantum state shared by Alice and Bob can be used as a one-time-pad to encrypt quantum messages which are sent down an insecure quantum channel. 

In the following lemma we give a characterization of states with perfect weak mutual independence analogous to the characterization of \cite{HHHO05} for pbits and of \cite{HOW09} for states with perfect mutual independence. We say a state $\ket{\psi_{a \alpha B E}}$ has perfect weak mutual independence if $\psi_{aE} = \psi_a \otimes \psi_E$.
\begin{lemma} \label{charactperfectweak}
A state  $\ket{\psi_{a \alpha B E}}$ has perfect weak mutual independence if, and only if, there is an isometry $U : \alpha B \rightarrow \overline{a} \overline{E}$ such that 
\begin{equation} \label{perfectweakmutual}
U \ket{\psi_{a \alpha B E}} = \ket{\phi_{a \overline{a}}} \otimes \ket{\chi_{e \overline{e}}}
\end{equation}
\end{lemma}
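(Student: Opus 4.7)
The plan is to prove this as a direct application of the standard uniqueness theorem for purifications: any two purifications of the same mixed state are related by an isometry on the purifying system. Both directions of the lemma then fall out immediately.

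For the ``if'' direction, I would simply observe that $U$ acts only on $\alpha B$, so it leaves the reduced state $\psi_{aE}$ invariant. Hence if $U\ket{\psi_{a\alpha BE}} = \ket{\phi_{a\overline{a}}}\otimes\ket{\chi_{E\overline{E}}}$, tracing out $\overline{a}\overline{E}$ yields $\psi_{aE} = \phi_a\otimes\chi_E$, which is manifestly product.

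For the ``only if'' direction, assume $\psi_{aE} = \psi_a\otimes\psi_E$. Pick any purification $\ket{\phi_{a\overline{a}}}$ of $\psi_a$ on an auxiliary register $\overline{a}$ and any purification $\ket{\chi_{E\overline{E}}}$ of $\psi_E$ on $\overline{E}$. Then $\ket{\phi_{a\overline{a}}}\otimes\ket{\chi_{E\overline{E}}}$ is a purification of $\psi_{aE}$ with purifying system $\overline{a}\overline{E}$. But $\ket{\psi_{a\alpha BE}}$ is, by assumption, another purification of the same state $\psi_{aE}$, now with purifying system $\alpha B$. By the uniqueness of purifications up to an isometry on the purifying register, there exists an isometry $U:\alpha B \hookrightarrow \overline{a}\overline{E}$ (this requires $\dim(\alpha B) \le \dim(\overline{a}\overline{E})$, which we can arrange by taking $\overline{a}\overline{E}$ large enough, e.g.\ by padding the purifications with extra dimensions) with
\begin{equation}
U\ket{\psi_{a\alpha BE}} = \ket{\phi_{a\overline{a}}}\otimes\ket{\chi_{E\overline{E}}},
\end{equation}
which is exactly the claim.

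There is really no substantial obstacle here; the content of the lemma is purely the purification-uniqueness fact applied to a product state. The only minor care needed is dimension-matching to guarantee that the map between purifying systems can be realized as an isometry in the stated direction $\alpha B \to \overline{a}\overline{E}$ rather than the other way around, but this is handled by choosing the auxiliary systems $\overline{a},\overline{E}$ of sufficient dimension. The analogy to the earlier characterization results for pbits in \cite{HHHO05} and for perfect mutual independence in \cite{HOW09} is transparent: in those cases one imposes the stronger condition that the whole $aB$--$E$ (or $AB$--$E$) cut is product and gets a correspondingly stronger normal form, whereas here only the $a$--$E$ cut is decoupled and one obtains the weaker factorization on the right-hand side of Eq.~(\ref{perfectweakmutual}).
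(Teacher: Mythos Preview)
Your proposal is correct and follows essentially the same approach as the paper's own proof: both directions are handled via the uniqueness of purifications (which the paper calls Uhlmann's theorem), constructing the tensor-product purification of $\psi_a\otimes\psi_E$ and relating it to $\ket{\psi_{a\alpha BE}}$ by an isometry on the purifying system $\alpha B$. Your version is slightly more explicit about the dimension-matching caveat, but otherwise the argument is the same.
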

\begin{proof}
It is clear that any state satisfying Eq. (\ref{perfectweakmutual}) has perfect weak mutual independence. To show the converse, we note that if $\psi_{aE} = \psi_a \otimes \psi_E$, there is a purification of $\psi_{aE}$ of the form $\ket{\phi_{a \overline{a}}} \otimes \ket{\chi_{e \overline{e}}}$. But as $\ket{\psi_{a \alpha B E}}$ is another purification of $\psi_{aE}$, by Ulhmann's theorem these two states must be related by an isometry acting on the purifying subsystem $\alpha B$. 
\end{proof}
\vspace{0.2 cm}

In \cite{Opp09} an intuitive explanation and generalised protocol
for Smith and Yard's 
examples of superactivation \cite{SY08} was presented. Given a pbit, one can use the erasure channel to send Alice's part of the \textit{shield} \cite{HHHO05} to Bob. This process then generates a state with coherent information equals half the size of the \textit{key} part of the pbit. Then, by considering the process in which Alice and Bob first distill pbits from their state and then use the erasure channel to convert them into distillable entanglement, we get Eq. (\ref{ssdistilableveruskey}).

A very similar protocol works also for mutual independence and weak mutual independence. For the latter, the $\alpha B$ register can be seen as the shield part of the state, which protects $\psi_a$ from having correlations with
$\psi_E$.  The amount of key is given by half the mutual information of the $a$ and $B$ registers. Now suppose Alice sends her share of 
the shield $\alpha$ through an erasure channel to Bob. With probability half, Bob gets $\alpha$. Then he can apply the isometry $U$ of Lemma \ref{charactperfectweak}, getting $S(a)$ of coherent information with Bob. In the case where Bob receives the erasure flag, they will end up with a state with coherent information $I(a \rangle B)$. Because Bob knows which case happened, the total coherent information is given by the average of the two, which is just the weak mutual independence of $\psi_{a \alpha B}$, $I(a : B)/2$. More formally, we have

\begin{theorem} \label{theoallthesame}
For a bipartite state $\psi_{AB}$,
\begin{equation}
W_{\text{ind}, E}(\psi_{AB}) = I_{\text{ind}, E}(\psi_{AB}) = D_{E}(\psi_{AB})
\end{equation}
and, likewise,
\begin{equation}
W_{\text{ind}, ss}(\psi_{AB}) = I_{\text{ind}, ss}(\psi_{AB}) = D_{ss}(\psi_{AB}).
\end{equation}
\end{theorem}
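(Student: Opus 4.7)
I will collapse the three quantities by establishing the cyclic chain
\[
D_{\mathcal{C}}(\psi_{AB}) \;\leq\; I_{\text{ind},\mathcal{C}}(\psi_{AB}) \;\leq\; W_{\text{ind},\mathcal{C}}(\psi_{AB}) \;\leq\; D_{\mathcal{C}}(\psi_{AB})
\]
separately for $\mathcal{C}=E$ and $\mathcal{C}=ss$. The two leftmost inequalities are immediate from the definitions: the $M_n$-dimensional maximally entangled state output by any entanglement-distillation protocol is automatically of the form $\rho^{(n)}_{ABE}=\Phi(M_n)\otimes\rho_E^{(n)}$, for which $\tfrac{1}{2n}I(A{:}B)=\log(M_n)/n$ matches the distillable-entanglement rate; and $\rho_{ABE}=\rho_{AB}\otimes\rho_E$ trivially implies $\rho_{AE}=\rho_A\otimes\rho_E$, so any mutual-independence protocol is \emph{a fortiori} a weak-mutual-independence protocol at the same rate.

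The crux of the argument is the reverse inequality $W_{\text{ind},E}(\psi_{AB})\leq D_E(\psi_{AB})$, which I would obtain by composing any optimal weak-mutual-independence protocol with one further use of the (free) erasure channel. Starting from a perfectly weak-mutually-independent output $\ket{\psi_{a\alpha BE}}$ (so $\psi_{aE}=\psi_a\otimes\psi_E$) with Alice holding the \emph{shield} register $\alpha$, Alice sends $\alpha$ through the erasure channel. On the success branch (probability $1/2$) Bob holds $\alpha B$ and, by Lemma~\ref{charactperfectweak}, applies a local isometry $U:\alpha B\to\bar a\bar E$ mapping the global state to $\ket{\phi_{a\bar a}}\otimes\ket{\chi_{e\bar e}}$, thereby recovering $S(a)$ ebits of pure Alice--Bob entanglement. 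On the failure branch (probability $1/2$) Alice and Bob share $\psi_{aB}$, whose coherent information is $I(a\rangle B)=S(B)-S(aB)$. Since Bob holds the erasure flag and can condition his distillation on it, standard flag-averaged hashing attains
\[
\tfrac{1}{2}\bigl[S(a)+S(B)-S(aB)\bigr] \;=\; \tfrac{1}{2}I(a{:}B)
\]
ebits per copy, which is precisely the weak-mutual-independence rate of $\ket{\psi_{a\alpha BE}}$.

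The hard part will be making this calculation rigorous in the asymptotic, approximate regime in which weak-mutual-independence protocols operate: Lemma~\ref{charactperfectweak} supplies the exact decoupling isometry $U$ only when $\psi_{aE}$ is \emph{exactly} product, whereas a generic protocol only achieves $\|\psi^{(n)}_{aE}-\psi^{(n)}_a\otimes\psi^{(n)}_E\|_1\leq\varepsilon_n\to 0$. I would close this gap with Uhlmann's theorem, which converts the trace-distance bound into an approximate isometry with fidelity deficit $\mathcal{O}(\sqrt{\varepsilon_n})$, together with Alicki--Fannes continuity of the coherent information to conclude that the flag-conditioned hashing rate is perturbed by at most an $o(1)$ term. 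Finally, the symmetric-side analogue $W_{\text{ind},ss}(\psi_{AB})\leq D_{ss}(\psi_{AB})$ follows by running exactly the same protocol: by Theorem~\ref{ssanti}, $\Lambda_{ss,d}$ simulates the erasure channel via local encoding and decoding, so the shield-extraction step can be performed using only $ss$-assistance and the rate calculation is unchanged.
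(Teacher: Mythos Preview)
Your proposal is correct and follows essentially the same route as the paper: the two trivial inequalities from the definitions, then the key step of sending the shield $\alpha$ through one more erasure channel and flag-averaging the coherent information to recover $\tfrac{1}{2}I(a{:}B)$, with the $ss$ case reduced via the erasure-simulation observation. The paper's argument is slightly slicker in that it never invokes Lemma~\ref{charactperfectweak} or an approximate Uhlmann isometry on the success branch; instead it computes the post-erasure coherent information directly as $\tfrac{1}{2}I(a\rangle B)+\tfrac{1}{2}I(a\rangle B\alpha)=\tfrac{1}{2}I(a\rangle B)-\tfrac{1}{2}I(a\rangle E)$ via the purity identity $I(X\rangle Y)=-I(X\rangle Z)$, and then the asymptotic decoupling $\phi^n_{aE}\approx\phi^n_a\otimes\phi^n_E$ yields $I(a\rangle E)\to -S(a)$ by continuity of entropy alone, so your Uhlmann/Alicki--Fannes detour is unnecessary.
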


\begin{proof}
It is clear that $W_{\text{ind}, E}(\psi_{AB}) \geq I_{\text{ind}, E}(\psi_{AB}) \geq D_{E}(\psi_{AB})$. So let us show
\begin{equation} 
D_{E}(\psi_{AB}) \geq W_{\text{ind}, E}(\psi_{AB}).
\end{equation}
Consider the optimal protocol for $W_{\text{ind},E}(\psi_{AB})$. We can assume all the classical communication 
is made by using the erasure channel and that no system is discarded until the very end. In the final step 
the state shared by Alice, Bob and Eve is $\ket{\phi^n_{a \alpha : B : E}}$, with the property that 
\begin{equation} \label{asymptoticproductness1}
\lim_{n \rightarrow \infty} \Vert \phi^n_{a : E} - \phi^n_{a} \otimes \phi^n_{E} \Vert_1 =0, 
\end{equation}
and 
\begin{equation}
W_{\text{ind}, E}(\psi_{AB}) = \liminf_{n \rightarrow \infty} \frac{1}{2n} I(a : B)_{\phi^n_{a\alpha:B:E}}.
\end{equation}
Now, suppose that Alice instead of discarding $\alpha$, sends it down the erasure channel. Then the global state becomes
\begin{equation}
\frac{1}{\sqrt{2}} \ket{\phi^n_{a : B\alpha : E}} \otimes \ket{e}_{E'} + \frac{1}{\sqrt{2}} \ket{\phi^n_{a : B : E \alpha}} \otimes \ket{e}_{B'}.
\end{equation}
Due to the orthogonality of the erasure flag to the rest of the state, the coherent information of Alice's and Bob's subsystems splits into two as 
\begin{equation}
\frac{1}{2} I(a \rangle B) + \frac{1}{2} I(a \rangle B \alpha) = \frac{1}{2} I(a \rangle B) - \frac{1}{2} I(a \rangle E) 
\end{equation}
where we used the identity $I(X \rangle Y) = - I(X \rangle Z)$, valid for all pure states $\ket{\phi}_{XYZ}$. Then, by the hashing inequality \cite{DW05}, 
\begin{eqnarray}
D_{E}(\psi_{AB}) &\geq& \lim_{n \rightarrow \infty} \frac{1}{2} I(a \rangle B) - \frac{1}{2} I(a \rangle E)  \nonumber \\ &=& \lim_{n \rightarrow \infty} \frac{1}{2} I(a \rangle B) + \frac{1}{2} S(a) \nonumber \\ &=& \lim_{n \rightarrow \infty} \frac{1}{2} I(a : B) \nonumber \\ &=& W_{\text{ind}, E}(\psi_{AB}), \nonumber \\
\end{eqnarray}
where the before-last equality follows from the fact that $a$ and $E$ are asymptotically product (see Eq. \ref{asymptoticproductness1}). 

The proof for the ss-assisted quantities is completely analogous.
We only have to use the observation made in section \ref{backdef} that a symmetric-side channel can be used to simulate an erasure channel and apply the reasoning from before. 
\end{proof}
\vspace{0.2 cm}

\textbf{Remark 1:}
We note that we do not know whether the ss-assisted distillable secret-key is equal to the other quantities, and we conjectured that it is not. Indeed, for an EPR pair all the quantities are equal to one. For a pbit \cite{HHHO09}, in turn, the distillable secret-key is equal to one, while we only have the Smith-Yard bound $D_{ss}(\psi_{AB}) \geq K_{ss}(\psi_{AB})/2$ \cite{SY08}, which we expect to be tight when the shield part of the pbit is composed of a separable state. 

\textbf{Remark 2:} The theorem can be applied to any channel for which 
sending the shield part of a state with perfect weak mutual independence does not decrease the weak mutual independence rate (which is the case for the symmetric-side channel and the erasure channel): for any such channel $\Lambda$, we have $W_{\Lambda} =  D_{\Lambda}$.

\section{Superactivation and Weak Mutual Independence} \label{superactivation}

In this section we show that weak mutual independence is related to how much the erasure channel can activate the distillable entanglement of a given state, at least under a restricted class of protocols. Let $D^{(1)}_{E}(\psi_{AB})$ be the maximum 
coherent information assisted by an erasure channel, defined as
\begin{equation}
D^{(1)}_{E}(\psi_{AB}) := \max_{A \hookrightarrow a \alpha} I(a \rangle BB')_{\omega_{aBB'}}
\end{equation}
with $\omega_{a bB} := \Lambda_{E}(\psi_{a \alpha B})$ and $\Lambda_{E} : \alpha \hookrightarrow b$ an erasure channel, and where the maximization is taken over all isometric splittings of $A$ into $a \alpha$. 
We can fully characterize this quantity by weak mutual independence rate as follows.

\begin{proposition}
\begin{eqnarray}
D^{(1)}_{E}(\psi_{AB}) &=& W^{(1)}_{\text{ind}, \emptyset}(\psi_{AB}) \\ &=& \max_{A \rightarrow a \alpha} \frac{1}{2} \left( I(a:B) - I(a:E)\right). \nonumber
\end{eqnarray}
\end{proposition}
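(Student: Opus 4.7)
The second equality in the proposition is just the definition of $W^{(1)}_{\text{ind}, \emptyset}$ given in the remark after Theorem 2, so the content to prove is the first equality $D^{(1)}_E(\psi_{AB}) = W^{(1)}_{\text{ind}, \emptyset}(\psi_{AB})$. My plan is to establish this identity for each isometric splitting $A \hookrightarrow a\alpha$ separately and then take the maximum over splittings on both sides, since both quantities optimize over exactly the same parameter.

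Fix such a splitting applied to the purification $\ket{\psi_{ABE}}$, producing a pure state $\ket{\psi_{a\alpha BE}}$, and let $\omega_{aBB'} := \Lambda_E(\psi_{a\alpha B})$ with $\alpha \hookrightarrow B'$ the erasure channel. Because $\Lambda_E(\sigma) = \tfrac{1}{2}\sigma + \tfrac{1}{2}\ket{e}\bra{e}$ with $\ket{e}$ orthogonal to the image of the un-erased branch, $\omega_{aBB'}$ is a direct sum on $B'$ of two equally weighted orthogonal components: the un-erased branch, on which Bob effectively holds $B\alpha$ and the reduction agrees with $\psi_{aB\alpha}$, and the erased branch, in which Bob's side is $\psi_{aB}\otimes \ket{e}\bra{e}$. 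Applying the direct-sum entropy rule $S(\sum_i p_i \omega_i) = H(\{p_i\}) + \sum_i p_i S(\omega_i)$ to both $S(BB')_\omega$ and $S(aBB')_\omega$, the binary entropies $H(1/2)$ cancel in the subtraction, giving
\begin{equation}
I(a\rangle BB')_\omega \;=\; \tfrac{1}{2}\, I(a\rangle B\alpha)_\psi + \tfrac{1}{2}\, I(a\rangle B)_\psi.
\end{equation}

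Next I would use purity of $\ket{\psi_{a\alpha BE}}$, which gives $I(a\rangle B\alpha)_\psi = -I(a\rangle E)_\psi$, together with the elementary identity $I(a:B) - I(a:E) = I(a\rangle B) - I(a\rangle E)$ (the $S(a)$ contributions cancel), to conclude $I(a\rangle BB')_\omega = \tfrac{1}{2}(I(a:B)_\psi - I(a:E)_\psi)$. Maximizing over isometric splittings $A \hookrightarrow a\alpha$ on both sides yields $D^{(1)}_E(\psi_{AB}) = W^{(1)}_{\text{ind}, \emptyset}(\psi_{AB})$.

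There is no serious obstacle here; the argument is essentially the same orthogonal-flag decomposition already used in the proof of Theorem 3 (where $\tfrac{1}{2}I(a\rangle B) + \tfrac{1}{2}I(a\rangle B\alpha)$ appears), now repackaged as a single-letter identity rather than an asymptotic inequality. The only subtlety to verify carefully is that the optimization ranges on the two sides genuinely coincide, so that the per-splitting identity upgrades to an identity of maxima without any slack from extra ancillary systems or from subsequent processing of $B'$ by Bob.
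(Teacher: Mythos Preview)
Your proposal is correct and follows essentially the same route as the paper: both compute $I(a\rangle BB')_\omega = \tfrac{1}{2}I(a\rangle B\alpha)_\psi + \tfrac{1}{2}I(a\rangle B)_\psi$ from the orthogonal-flag structure of the erasure channel and then reduce this to $\tfrac{1}{2}(I(a:B)-I(a:E))$ via purity of $\ket{\psi_{a\alpha BE}}$, with the same per-splitting identity upgraded to an identity of maxima. The paper's entropic chain passes through $\tfrac{1}{2}I(a\rangle B\alpha)-\tfrac{1}{2}I(a\rangle E\alpha)$ as an intermediate step whereas you go directly via $I(a\rangle B\alpha)=-I(a\rangle E)$, but this is a cosmetic difference; your worry about the optimization ranges is unfounded, since both quantities are defined by maximizing over exactly the same set of isometric splittings $A\hookrightarrow a\alpha$.
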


\begin{proof}
A simple calculation gives
\begin{equation}
D^{(1)}_{E}(\psi_{AB}) = \max_{A \hookrightarrow a \alpha} \frac{1}{2} I(a \rangle B \alpha)_{\psi_{a \alpha B}} + \frac{1}{2} I(a \rangle B)_{\psi_{a \alpha B}}.
\end{equation}

Writing the purification of $\psi_{a \alpha B}$ as $\psi_{a \alpha B E}$ we then find
\begin{eqnarray}
D^{(1)}_{E}(\Psi_{AB}) &=& \max_{A \hookrightarrow a \alpha}  \frac{1}{2} I(a \rangle B \alpha)_{\psi} + \frac{1}{2} I(a \rangle B)_{\psi} \nonumber \\ &=& \max_{A \hookrightarrow a \alpha} \frac{1}{2} I(a \rangle B \alpha)_{\psi} - \frac{1}{2} I(a \rangle E \alpha)_{\psi} \nonumber \\
&=&  \max_{A \hookrightarrow a \alpha} \frac{1}{2} I(a \rangle B)_{\psi} - \frac{1}{2} I(a \rangle E)_{\psi}. \nonumber \\
\end{eqnarray}
\end{proof}

\vspace{0.3 cm}
A corollary of the theorem is that there are states for which $W_{\text{ind}, \emptyset}(\psi_{AB}) \gg K_{\rightarrow}(\psi_{AB})$. An example is given by the Jamiolkowski state $\psi_{AB}$ of the \textit{rocket channel} of \cite{SS09}, where it was shown that $D_E^{(1)}(\psi_{AB}) \gg K_{\rightarrow}(\psi_{AB})$ (with the difference being of order of the number of qubits of Alice's state). 

By the same reasoning we also get that there are states for which the entanglement measure \textit{squashed entanglement} ($E_{\text{sq}}$) \cite{CW03} is much larger than the one-way distillable secret key. This comes from the observation
that the squashed entanglement is an upper bound on $W_{\text{ind}, \emptyset}$ (a fact proven implicitly in \cite{ADHW08} and explicitly in \cite{HOW09}) and so for the state associated with the rocket channel, 
\begin{equation}
E_{\text{sq}}(\psi_{AB}) \geq W_{\text{ind}, \emptyset}(\psi_{AB}) = D_E^{(1)}(\psi_{AB}) \gg K_{\rightarrow}(\psi_{AB}).
\end{equation}
Since the entanglement of formation $E_f$ can be much greater
than the squashed entanglement \cite{christandl-locking}, 
we also have that $E_f\gg W_{\text{ind}}$ is possible.

\section{Conclusions}

The capacity of a quantum channel is difficult to calculate because the best
formula we have~\cite{Llo97, Sho02, Dev05}, the coherent information, requires an optimisation
over an arbitrarily large number of usages of the channel.
The symmetric side-channel was originally introduced to  provide some insight into this optimisation --
it gives a more tractable upper-bound on the capacity of a channel, since when used in
conjunction with any channel, the combined capacity is single-letter.

However, here, and in \cite{BO10}, we have seen that the symmetric-side channel
is more than a calculational tool.  It should be thought of as playing the
role of public quantum communication, in the same way as public classical
communication makes the theory of private classical channels more elegant
and physically natural. The symmetric side-channel is conceptually
analogous to what one demands of a notion of 
public quantum communication (the receiver and eavesdropper both get the
same information), and it furthermore makes the rates for distilling
entanglement equivalent in form to the rates of distilling private key
using public classical communication.

The similarity between
entanglement and private correlations was used in
constructing the first entanglement distillation protocols, and
has been used
to conjecture new types of classical distributions \cite{GW00}. 
But the analogy between privacy and
entanglement was first made fully explicit by Collins and Popescu
\cite{CP02} and extended in \cite{OSW05}.  The identification of the symmetric side-channel
with public classical communication completes this analogy. 

We have seen that the introduction of public quantum communication
makes the rate for distilling entanglement have a similar form as the
rate for distilling classical key.  Likewise, it allows for correcting
noisy correlations, in much the same way as in the classical case.  This
gives further motivation for the study of mutual independence, and the
weak mutual independence introduced here.  It also helps us understand
the phenomena of superactivation, and more generally, non-additivity
of the channel capacity: the symmetric side-channel has no capacity,
but it helps correct errors introduced by a noisy quantum channel, in
much the same way that classical communication allows for error reconciliation
of private correlations.  A better understanding of weak mutual independence
thus provides a way to better understand superactivation and other forms
of non-additivity.

The mutual independence and the weak mutual independence also help us
understand tripartite correlations.  It quantifies how two parties
can be correlated (or decoupled) from a third party.  Mutual independence
quantifies how hard it is for the global state to be decoupled while
still retaining bipartite correlations, while the weak mutual independence
quantifies how hard it is for one party to decouple.  To quantify
how hard it is for each party to individually decouple from the third
party (while retaining bipartite correlations), one can 
consider the {\it not-so-weak mutual independence}, 
where we demand individual privacy
but not collective privacy: 
$\rho_{AE}\simeq\rho_A\otimes\rho_E$ and $\rho_{BE}\simeq\rho_B\otimes\rho_E$.
How this compares to the original mutual independence, enables one to quantify
the extent to which a third party is {\it correlated to the correlations}
of two parties. It thus enables a better understanding of genuine tripartite
correlations, a concept which is not well characterised even in the classical
case.  This is reminiscent of attempts to understand bipartite correlations
(and mutual information) in terms of the number of unitaries needed
to decouple a state from another \cite{GPW05}.

The work here raises a lot of open questions.  For example, we
do not even know if the erasure channel or the
 symmetric side-channel helps in distilling weak mutual independence.  For
that matter, it is possible that even a classical communication channel is not
helpful. It
is possible, that these additional channels are only useful for correcting errors (i.e. turning
mutual independence into EPR pairs).  It is also possible that the
erasure channel is as good as the symmetric-side channel for distilling
weak mutual-independence, and hence for superactivation of private states.  Indeed, for general
private states, the erasure channel appears to provide the optimal 
protocol~\cite{Opp09}. Using the results of Section \ref{backdef}
it would then be the optimal anti-degradable channel.

This work also suggests several questions about categories
of states and channels.
Are there states which have weak mutual independence, but from which
no private key can be distilled?  This is related to the problem of
{\it bound key} \cite{GW00}. 
Can we find a characterization of the convex set of states with zero $D_{ss}$?
Can we characterize the class of channels which have zero capacity for generating weak
mutual independence?
Can we use the connection between mutual independence and channel capacity
to find more examples of non-additivity and superactivation?

Finally, although Eq. (\ref{ssdistillableent}) is single-letter, it is of little practical use, as it involves an optimization over a system of unbounded dimension. Can we upper bound the size of the register that goes in the symmetric-side channel in the optimal protocol, in analogy to what can be done in the classical case?

\section*{Acknowledgements}
We are grateful to Graeme Smith for interesting discussions, 
and his helpful comments on an early draft of this work.  JO thanks 
the hospitality of KITP, Santa Barbara during his stay.

\appendix{Proof of Theorem \ref{ssanti}} \label{app1}

Let $U_{\Lambda} : A \hookrightarrow BE$ be an isometric extension of $\Lambda$. Since $\Lambda$ is anti-degradable, there is an isometry $V : E \hookrightarrow FG$ such that 
\begin{equation}
\ket{\phi}_{RBFG} := \left(V^{E \rightarrow FG} \circ U_{\Lambda}^{A\rightarrow BE}\right) \ket{\psi}_{RA} 
\end{equation}
have equal $RB$ and $RF$ reduced states (i.e. $\phi_{RB} = \phi_{RF}$) for every state $\ket{\psi_{RA}}$, where $R$ is a reference system. 

Define the state
\begin{eqnarray}
&&\ket{\Psi}_{RBFG_1G_2H_1H_2} := \frac{1}{\sqrt{2}}U_{ss, |G|}^{G \rightarrow G_1G_2}\ket{\phi}_{RBFG}\otimes\ket{00}_{H_1H_2} \nonumber \\ &+&\frac{1}{\sqrt{2}}\text{SWAP}_{BF}\left(U_{ss, |G|}^{G \rightarrow G_1G_2}\ket{\phi}_{RBFG}\right) \otimes \ket{11}_{H_1H_2}, \nonumber
\end{eqnarray}
with $\text{SWAP}_{BF}$ the unitary which swaps subsystems $B$ and $F$. By construction $\Psi_{RB} = \frac{1}{2}(\phi_{RB} + \phi_{RF}) = \phi_{RB}$ and so the states $\ket{\Psi}$ and $\ket{\phi}$ are related by an isometry from $FG$ to $FG_1G_2H_1H_2$. Therefore it suffices to show how Alice and Bob can create the state $\ket{\Psi}$  (with Eve holding the $FG_1G_2H_1H_2$ registers) by a symmetric-side channel and local operations. 

Note that $\ket{\Psi}$ is permutation symmetric with respect to the subsystems $\alpha := BG_1H_1$ and $\overline{\alpha} := FG_2H_2$. Thus there is a pre-image state $\ket{\Phi}_{RS}$ such that 
\begin{equation}
\ket{\Psi}_{RBFG_1G_2H_1H_2} = U_{ss, |S|}^{S\rightarrow \alpha \overline{\alpha}} \ket{\Phi}_{RS}.
\end{equation}

Alice and Bob's protocol is the following: Alice transforms $\ket{\psi_{RA}}$ into $\ket{\Phi}_{RS}$ by applying an operation ${\cal E} : A \rightarrow S$ (which always exists because $\psi_R = \Phi_R$). Then she sends $S$ through a symmetric-side channel producing the state $\ket{\Psi}$, where Alice and Eve hold the $BG_1H_1$ and $FG_2H_2$ registers, respectively. Finally, Bob applies his local operation ${\cal F}$ which consists of tracing out the register $G_1H_1$.

\end{document}